\renewcommand{\natural}{{\mathbb{N}}}
\newcommand{\real}{{\mathbb{R}}}
\newcommand{\map}[3]{#1: #2 \rightarrow #3}
\newcommand{\sign}{\operatorname{sign}}
\newcommand{\until}[1]{\{1,\ldots,#1\}}
\newcommand{\EE}{\mathcal{E}} 
\newcommand{\GG}{\mathcal{G}}
\newcommand{\KK}{\mathcal{K}}
\newcommand{\II}{\mathcal{I}} 
\newcommand{\NN}{\mathcal{N}} 
\renewcommand{\SS}{\mathcal{S}}
\newcommand{\VV}{\mathcal{V}}
 \newcommand{\subj}{\text{subj.\ to}}
\newcommand{\argmin}{\mathop{\rm argmin}}
\newcommand{\nbrs}{\mathcal{N}}
\newcommand{\innbrs}{\mathcal{N}}
\newcommand{\StatexIndent}[1][3]{%
  \setlength\@tempdima{\algorithmicindent}%
  \Statex\hskip\dimexpr#1\@tempdima\relax}
\algnewcommand{\algorithmicgoto}{\textbf{go to }}%
\algnewcommand{\Goto}[1]{\algorithmicgoto Line~\ref{#1}}%
\algnewcommand{\Label}{\State\unskip}
\newtheorem{theorem}{Theorem}[section]
\newtheorem{proposition}[theorem]{Proposition}
\newtheorem{assumption}[theorem]{Assumption}
\renewcommand{\lim}{\operatornamewithlimits{lim\vphantom{p}}}
\newcommand\oprocendsymbol{\hbox{$\square$}}
\newcommand\oprocend{\relax\ifmmode\else\unskip\hfill\fi\oprocendsymbol}
\def\eqoprocend{\tag*{$\square$}}
\newcommand{\bx}{\mathbf{x}}
\newcommand{\avg}{\bar{\mathbf{z}}}
\newcommand{\bv}{\mathbf{v}}
\newcommand{\bn}{\mathbf{n}}
\newcommand{\bw}{\mathbf{w}}
\newcommand{\by}{\mathbf{y}}
\newcommand{\bA}{\mathbf{A}}
\newcommand{\bD}{\mathbf{D}}
\newcommand{\bb}{\mathbf{b}}
\newcommand{\bI}{\mathbf{I}}
\newcommand{\1}{\mathbf{1}}
\newcommand{\bpi}{\boldsymbol{\pi}}
\newcommand{\tbpi}{\widetilde{\boldsymbol{\pi}}}
\newcommand{\bphi}{\boldsymbol{\phi}}
\newcommand{\bxi}{\boldsymbol{\xi}}
\newcommand{\kron}{\otimes}
\newcommand{\tf}{\tilde{f}}
\definecolor{blue@O4S}{RGB}{0, 41, 69}
\definecolor{emph@O4S}{RGB}{0, 93, 137}
\definecolor{red@O4S}{RGB}{127,0,0}
\definecolor{gray@O4S}{RGB}{112, 112, 112}
\def \algfullname/{{\sc Block-SONATA}}
\def \algacronym/{{\sc Block-SONATA}}
\begin{document}

\title{Distributed Big-Data Optimization \\via Block-Iterative Convexification and Averaging\vspace{-0.3cm}}

\author{Ivano Notarnicola$^{\ast}$, Ying Sun$^{\ast}$, Gesualdo Scutari, Giuseppe Notarstefano\vspace{-0.3cm}
\thanks{$^\ast$These authors equally contributed and are in alphabetic order.}
\thanks{The work of Notarnicola and Notarstefano has received funding from the
  European Research Council (ERC) under the European Union's
  Horizon 2020 research and innovation programme (grant agreement No 638992 -
  OPT4SMART).
  The work of Sun and Scutari has been supported by the USA National Science
  Foundation under Grants CIF 1632599 and CIF 1719205; and in part 
  by the Office of Naval Research under Grant N00014-16-1-2244.
  }
\thanks{Ivano Notarnicola and Giuseppe Notarstefano are with the
  Department of Engineering, Universit\`a del Salento, Lecce, Italy, 
  \texttt{name.lastname@unisalento.it.}  } 
\thanks{Ying Sun and Gesualdo Scutari are with the School of Industrial Engineering, 
Purdue University, West-Lafayette, IN, USA, \texttt{\{sun578,gscutari\}@purdue.edu.}}}

\maketitle

\begin{abstract}
  In this paper, we study \emph{distributed big-data nonconvex} optimization in
  multi-agent networks.  We consider the (constrained) minimization of the sum
  of a smooth (possibly) nonconvex function, i.e., the agents' sum-utility, plus
  a convex (possibly) nonsmooth regularizer. Our interest is in big-data
  problems wherein there is a large number of variables to optimize. If treated
  by means of standard distributed optimization algorithms, these large-scale
  problems may be intractable, due to the prohibitive local computation and
  communication burden at each node.  We propose a novel distributed solution
  method whereby at each iteration agents optimize and then communicate  (in an uncoordinated fashion)
  only a subset of their decision variables. To deal with   non-convexity
  of the cost function, the novel scheme hinges on Successive Convex
  Approximation (SCA) techniques coupled with i) a tracking mechanism
  instrumental to locally estimate gradient averages; and ii) a novel
  \emph{block-wise} consensus-based protocol to perform local block-averaging
  operations and gradient tacking. Asymptotic convergence to stationary
  solutions of the nonconvex problem is established. Finally, numerical results
  show the effectiveness of the proposed algorithm and highlight how the block
  dimension impacts on the communication overhead and practical convergence
  speed. 
\end{abstract}

\section{Introduction}
\label{sec:intro}
In many modern control, estimation, and learning applications, optimization
problems with a very-high dimensional decision variable arise. These problems
are often referred to as \emph{big-data} and call for the development of new
algorithms.
A lot of attention has been devoted in recent years to devise 
parallel, possibly
asynchronous, algorithms to solve problems of this sort on   shared-memory systems.
Distributed optimization has also  received significant
attention, since it provides   methods  to
solve cooperatively networked optimization problems   by exchanging messages only with
neighbors and without resorting to any shared memory or centralized coordination.
In this paper, we consider  \emph{distributed big-data optimization}, that is, big-data problems that must be solved by a network of agents in
a distributed way. We are not aware of any work in the literature that can address the challenges of big-data optimization over networks. 
We organize the relevant literature  in two groups, namely: (i)
centralized and parallel methods for big-data optimization, and (ii) primal distributed methods  for  multi-agent optimization.

A coordinate-descent method for huge-scale
smooth optimization problems has been introduced in~\cite{nesterov2012efficiency}, and then   extended in~\cite{richtarik2012parallel,richtarik2014iteration,necoara2016parallel} to deal
with (convex) composite objective functions  in a parallel fashion.
A parallel algorithm based on   Successive Convex Approximation (SCA) has been
proposed in~\cite{facchinei2015parallel} to cope with non-convex objective functions. 
An asynchronous extension   has been proposed 
in~\cite{cannelli2016asynchronous}.
In~\cite{mokhtari2016doubly} a parallel stochastic-gradient algorithm has been
studied wherein  each processor randomly chooses a component of the sum-utility function and updates only a  block (chosen uniformly at random)  of the entire decision variable.
Finally, an asynchronous parallel mini-batch algorithm for stochastic
optimization has been proposed in~\cite{feyzmahdavian2016asynchronous}. These algorithms, however, are not   implementable efficiently in a  distributed network setting, because either  they  assume that all agents know the whole sum-utility or that, at each iteration, each agent has access to  the other agents' variables.

In the last years,   distributed multi-agent optimization  has received significant attention. Here, we refer only to distributed primal methods, which are more closely
related to the approach proposed in this paper. 
In~\cite{necoara2013random}, a coordinate descent method to solve linearly constrained
problems over undirected networks has been proposed.
In~\cite{nedic2015distributed} a broadcast-based algorithm over time-varying digraphs has been studied, and its extension  to distributed online optimization 
has been considered in~\cite{akbari2015distributed}.
In~\cite{margellos2016distributed} a distributed algorithm for convex optimization 
over time-varying networks in the presence of constraints and uncertainty using a proximal 
minimization approach has been proposed.
A distributed algorithm, termed NEXT, combining SCA techniques with a novel gradient tracking 
mechanism instrumental to  estimate locally the average of the agents' gradients, has been proposed   
in~\cite{dilorenzo2015distributed,dilorenzo2016next} to solve nonconvex constrained optimization 
problems over time-varying networks. The scheme has been extended in ~\cite{sun2016distributed,sun2017distributed} to deal with 
directed (time-varying) graphs. 
More recently in~\cite{nedic2016geometrically} and \cite{qu2016harnessing}, special
instances of the two aforementioned algorithms have been shown to enjoy
geometric convergence rate when applied to unconstrained smooth (strongly)
convex problems.
A Newton-Raphson consensus strategy has been introduced in~\cite{zanella2012asynchronous}
to solve unconstrained, convex optimization problems leveraging asynchronous, 
symmetric gossip communications.
The same technique has been extended to design an algorithm for directed, asynchronous, and 
lossy communications networks in~\cite{carli2015analysis}.
None of these schemes can efficiently  deal with  big-data
problems. In fact, when it comes to big-data problems, local computation and
communication requirements need to be explicitly taken into account in the
algorithmic design. Specifically,  (i)
local optimization steps on the \emph{entire} decision variable cannot be performed,
because they would be too computationally demanding; and (ii) communicating,  at each iteration,  the entire vector of variables of the agents would   incur in an unaffordable communication overhead. %
%
%

First attempts to distributed big-data optimization are~\cite{carli2013distributed,notarnicola2016randomized}
for a structured, \emph{partitioned}, optimization problem, and~\cite{wang2016coordinate}
by means of a partial stochastic gradient for strongly convex smooth problems. These schemes however are not applicable to multi-agent problems wherein the agents' functions are not (partially) separable.

%
%
In this paper, we propose the first distributed algorithm for (possibly nonconvex) big-data optimization problems over  networks, modeled as  digraphs.  %
To cope with the issues posed by big-data problems, in the proposed scheme,   agents
maintain  a local estimate of the common decision variables  but, at  every iteration, they update and
communicate  only \emph{one block}. Blocks are selected  in an uncoordinated
fashion among the agents by means of an ``essentially cyclic rule'' guaranteeing that all blocks
are persistently updated.  %
Specifically,   each agent minimizes a strongly-convex local approximation of the nonconvex sum-utility  function with respect to the
selected block variable only. Inspired by the SONATA
algorithm~\cite{sun2017distributed}, the surrogate is based on the agent local
cost-function and a local  gradient estimate (of the smooth global-cost portion).
The optimization step is combined with a \emph{block-wise} consensus step, 
 tracking the network average gradient and guaranteeing the asymptotic agreement of the local solution estimates to a
common stationary solution of the nonconvex problem.

The paper is organized as follows. In Section~\ref{sec:setup_preliminaries}, we
present the problem set-up and recall some preliminaries. In
Section~\ref{sec:algorithm}, we first introduce the new block-wise consensus  protocol, 
and then formally present our novel distributed big-data
optimization algorithm along with its convergence properties. Finally, in Section~\ref{sec:simulations}, we show a numerical
example to test our algorithm.


\section{Distributed Big-Data Optimization:\\Set-up and Preliminaries}
\label{sec:setup_preliminaries}
In this section, we introduce the big-data optimization set-up and recall two
distributed optimization algorithms inspiring the one we propose in this paper.

\subsection{Distributed Big-Data Optimization Set-up}
\label{subsec:setup}
We consider a multi-agent system composed of $N$ agents, aiming at solving cooperatively the following  (possibly) nonconvex, nonsmooth,  \emph{large-scale} optimization problem\vspace{-0.1cm}
\begin{align}
\tag{P}
\begin{split}
  \min_{\bx} \,\,\,\,\,\: &\: U ( \bx) \triangleq \sum_{i=1}^N f_i ( \bx ) + \sum_{\ell = 1}^B g_\ell ( \bx_\ell )
  \\
  \subj \: &\: \bx_\ell \in \KK_\ell, \quad \ell \in\until{B},
\end{split}
\label{eq:problem}
\end{align}
where   $\bx$ is the vector  of optimization variables, partitioned in $B$ blocks\vspace{-0.2cm}
\begin{align*}
  \bx \triangleq 
  \begin{bmatrix}
    \bx_1 \\[-0.4em] \vdots \\[-0.4em] \bx_B
  \end{bmatrix},
\end{align*}
with each   $\bx_\ell\in\real^{d}$, $\ell\in\until{B}$; $\map{f_i}{\real^{dB}}{\real}$ is the cost
function of agent $i$, assumed to be smooth but (possibly) nonconvex; 
  $\map{g_\ell}{\real^{d}}{\real}$, $\ell\in\until{B}$, is 
a convex (possibly nonsmooth) function; and $\cal K_\ell$, $\ell\in\until{B}$, is a closed convex set. 
Usually the nonsmooth term in \eqref{eq:problem} is used to promote
some extra structure in the solution, such as (group) sparsity. In the following, we will 
denote by $\KK\triangleq \KK_1\times \cdots \times \KK_{B}$ the feasible set of \eqref{eq:problem}.  
We study problem~\eqref{eq:problem} under the following assumptions. 
%
%

  \begin{assumption}[On the optimization problem]\mbox{}
\begin{enumerate} 
  \item Each $\KK_\ell \neq \emptyset$  is closed and convex;
  \item Each $f_i:\mathbb{R}^{dB}\rightarrow \mathbb{R}$  is $C^1$ on (an open set containing)  $\KK$;
  \item Each $\nabla f_i$ is $L_i$-Lipschitz  continuous 
    and bounded on $\KK$;
  \item Each $g_\ell:\mathbb{R}^d\rightarrow \mathbb{R}$ is convex (possibly nonsmooth)  on (an open set containing) $\KK$,
  with bounded subgradients over $\KK$;
  \item $U$ is coercive on $\KK$, i.e., $\lim_{\bx\in\KK,\|\bx\|\to\infty} U(\bx) = \infty$. \oprocend
\end{enumerate}
\label{ass:cost_functions}
\end{assumption}
The above assumptions are quite standard and satisfied by many practical  problems; see, e.g. \cite{facchinei2015parallel}. Here, we only remark that we do not assume any convexity of $f_i$. In the following, we also make the blanket   assumption  that each agent $i$ knows only its own cost function $f_i$ (the regularizers $g_\ell$ and the feasible set $\cal K$) but not the other agents' functions.  

\smallskip
\noindent \emph{On the communication network:} 
The  communication network of the agents is  modeled as a  fixed, directed
graph $\GG = (\until{N},\EE)$, where $\EE\subseteq \until{N} \times \until{N}$
is the set of edges. The edge $(i,j)$ models the fact that agent $i$
can send a message to agent $j$.
We denote by $\nbrs_i$ the set of \emph{in-neighbors} of node $i$ in the fixed
graph $\GG$  {including itself}, i.e.,
$\nbrs_i \triangleq \left\{j \in \until{N} \mid (j,i) \in \EE \right\}{\cup \{i\}}$.  
We make the following assumption on the graph connectivity. 
\smallskip
\begin{assumption}
The graph $\GG$ is strongly connected.~\oprocend
\label{ass:strong_conn}
\end{assumption} 
\smallskip

\noindent\emph{Algorithmic Desiderata:} Our goal is to solve problem~\eqref{eq:problem} in 
a distributed fashion, leveraging local communications among neighboring agents. 
As a    major departure from current  literature on distributed optimization,  
here we focus on \emph{big-data} instances of problem~\eqref{eq:problem} wherein the vector of 
variables $\bx$ is composed of a huge number of components ($B$ is very large).  In such problems, 
minimizing the sum-utility with respect to the whole $\bx$, or  even computing the gradient or evaluating 
the value of a single  function $f_i$, can require substantial computational efforts. 
Moreover,
exchanging  an estimate of the \emph{entire} local decision variable $\bx$ over the network 
(like current distributed schemes do) is not efficient or even feasible, due to the excessive 
communication overhead.  We design next the first scheme able to deal with such challenges.
To this end, we first review two existing distributed algorithms
for nonconvex optimization that will act as building blocks for our novel algorithm. 

\subsection{NEXT and SONATA: A Quick Overview}
\label{subsec:preliminaries}

In~\cite{dilorenzo2015distributed,dilorenzo2016next} a distributed algorithm, termed NEXT, is proposed to solve
nonconvex  optimization problems in the form~\eqref{eq:problem}.
%
%
%
The algorithm is based on an iterative two-step procedure whereby all the agents first update their  local estimate  $\bx_{(i)}$  of the optimization variable
$\bx$ by solving a suitably  chosen convexification of \eqref{eq:problem}, and then communicate with their neighbors to asymptotically force an agreement among the local variables while converging to  a stationary solution of \eqref{eq:problem}.
%
The strongly convex approximation of the
original noncovex function has the following form: The  nonconvex cost
function $f_i$ is replaced by  a suitable  strongly convex  \emph{surrogate function}
$\tilde{f}_i$  (see~\cite{facchinei2015parallel}) whereas the sum of the unknown   functions of the other agents  $\sum_{j\neq i} f_j$ is approximated by a linear term whose coefficients track the gradient of $\sum_{j\neq i} f_j$.  
%
More formally, given the current iterate $\bx_{(i)}^t$,   the optimization step reads:  
\begin{align*}
  \widetilde{\bx}_{(i)}^{t} & = \argmin_{\bx \in \KK} \tilde{f}_i (\bx; \bx_{(i)}^t ) +
  (\bx - \bx_{(i)}^t )^\top \widetilde{\bpi}_{(i)}^t + g(\bx),
  \\
  \bv_{(i)}^{t} & = \bx_{(i)}^{t} +\gamma^t ( \widetilde{\bx}_{(i)}^{t} - \bx_{(i)}^{t}),
\end{align*}
where $\widetilde{\bpi}_{(i)}^t$ is a local estimate of   $\sum_{j\neq i} \nabla f_j(\bx_{(i)}^t)$ (that needs to be properly updated);  $g(\bx)\triangleq \sum_{\ell = 1}^B g_\ell ( \bx_\ell )$; and  $\gamma^t$ is the  step-size. Given $\widetilde{\bx}_{(i)}^{t}$ and $\bx_{(i)}^{t}$, the local variable is updated from $\bx_{(i)}^{t}$ along the direction 
$\widetilde{\bx}_{(i)}^{t} - \bx_{(i)}^{t}$ using the step-size value $\gamma^t$. The resulting $\bv_{(i)}^{t}$ will
then be averaged with the neighboring counterparts to enforce a consensus.

The second step of NEXT, consists in communicating with the neighbors in order to update the local decision variables as well as the gradient estimates  $\widetilde{\bpi}_{(i)}^t$. 
Formally, we have the following two consensus-based updates:
\begin{align*}
  \bx_{(i)}^{t+1} & = \sum_{j\in \mathcal N_i}   a_{ij}  \bv_{(j)}^{t},  
  \\
  \by_{(i)}^{t+1} & = \sum_{j\in \mathcal N_i} a_{ij}  \by_{(j)}^{t} 
  +\nabla f_i (\bx_{(i)}^{t+1}) - \nabla f_i (\bx_{(i)}^{t}),
  \\
  \widetilde{\bpi}_{(i)}^{t+1} & = N \cdot \by_{(i)}^t \!-\! \nabla f_i (\bx_{(i)}^{t+1}),
\end{align*}
where $\by_{(i)}^{t}$ is an auxiliary local variable (exchanged among neighbors) instrumental to update $\widetilde{\bpi}_{(i)}^t$; and  $\bA \triangleq [a_{ij}]_{i,j=1}^N$ is a \emph{doubly-stochastic}
matrix that matches the   communication graph $\mathcal G$: $a_{ij}>0$,   if $j\in \mathcal N_i$; and $a_{ij}=0$ otherwise.

Note that NEXT requires the weight matrices $\bA$ to be doubly-stochastic, which limits the applicability of the algorithm to  directed graphs.  
SONATA, proposed in \cite{sun2016distributed,sun2017distributed}, overcomes this limitation by replacing the plain consensus scheme of NEXT with a push-sum-like mechanism, which recovers dynamic
average consensus of the local decision variable $\bx_{(i)}^{t}$ and 
gradient estimates $\by_{(i)}^{t}$ by using only column stochastic weight matrices. Introducing an auxiliary local variable
$\phi_{(i)}^t$ at each agent's side, the consensus protocol of SONATA reads
\begin{align*}
  \phi_{(i)}^{t+1} & = \sum_{j\in \mathcal N_i} a_{ij}  \phi_{(j)}^{t},
  \\
  \bx_{(i)}^{t+1} & = \frac{1}{\phi_{(i)}^{t+1}} \sum_{j\in \mathcal N_i}  a_{ij}  \phi_{(j)}^{t} \bv_{(j)}^{t},
  \\
  \by_{(i)}^{t+1} & = \frac{1}{\phi_{(i)}^{t+1}} \! \left( \sum_{j\in \mathcal N_i} \! a_{ij}\, \phi_{(j)}^{t} \by_{(j)}^{t}
  + \nabla f_i (\bx_{(i)}^{t+1}) - \nabla f_i (\bx_{(i)}^{t}) \! \right)\!,
  \\
  \widetilde{\bpi}_{(i)}^{t+1} & = N \cdot \by_{(i)}^t - \nabla f_i (\bx_{(i)}^{t+1}),
\end{align*}
where now $ \bA \triangleq [a_{ij}]_{i,j=1}^N$ is only \emph{column
stochastic} (still matching the graph $\mathcal G$).




\section{\algfullname/ Distributed Algorithm}
\label{sec:algorithm}
In this section we introduce our distributed big-data optimization algorithm.
 Differently from current distributed methods, our algorithm performs  \emph{block-wise} updates and communications.  A building block of the proposed scheme is a   block-wise consensus protocol of independent interest, which is introduced next (cf. Sec. \ref{subsec:block_consensus}). Then, we will be ready to describe our new algorithm (cf. Sec.~\ref{Sec:B_SONATA}).
 
%

\subsection{Block-wise Consensus}
\label{subsec:block_consensus}
We propose a push-sum-like scheme that acts at the level of each block $\ell\in\until{B}$.
%
Specifically, consider a system of $N$ agents, whose communication network is modeled as a digraph $\cal G$ satisfying Assumption \ref{ass:strong_conn}; and let the agents aim at agreeing on the (weighted) average value of their
initial states $\bx_{(i)}^0$, $i\in\until{N}$. While  at each
iteration $t$  agents can update their entire vector $\bx_{(i)}^t$, they can however send to
their neighbors \emph{only one block}; let denote by $\bx_{(i,\ell_i^t)}$ the block $\ell_i^t$ that, at time $t$, agent $i$ selects (according to a suitably chosen rule) and sends to its neighbors,   with $\ell_i^t \in \until{B}$.
Thus, at each iteration, agent $i$ runs a consensus protocol on the $\ell$-th
block by using only information received from in-neighbors that have sent
block $\ell$ at time $t$ (if any).
A natural way to model this protocol is to introduce a \emph{block-dependent}
neighbor set, defined as
\begin{align*}
  \nbrs_{i,\ell}^t \triangleq \{ j \in \nbrs_i \mid \ell_j^t = \ell \}  \cup \{i \} \subseteq \nbrs_i,
\end{align*}
which includes, besides agent $i$, only the in-neighbors of agent $i$ in $\GG$ that have
sent block $\ell$ at time $t$.
Consistently, we denote by $\GG_\ell^t\triangleq (\until{N},\EE_\ell^t)$ the \emph{time-varying} subgraph of
$\GG$ associated to block $\ell$ at iteration $t$. Its edge set is
\begin{align*}
  \EE_\ell^t \triangleq \{(j,i)\in\EE \mid j\in \nbrs_{i,\ell}^t, i\in\until{N}\}.
\end{align*}
Following the idea of consensus protocols over time-varying digraphs, we
introduce a weight matrix $\bA_\ell^t \triangleq [a_{ij\ell}^t]_{i,j=1}^N$ matching $\GG_\ell^t$, such
that $a_{ij\ell}^t \in [\theta, 1]$, for some $\theta \in (0,1)$, 
if $j \in \nbrs_{i,\ell}^t$; and $a_{ij\ell}^t = 0$
otherwise.  Using $\bA_\ell^t$, we can rewrite the consensus scheme of SONATA block-wise as

\begin{align}
\label{eq:block_consensus}
\begin{split}
  \phi_{(i,\ell)}^{t+1}  & = 
  \sum_{j\in \innbrs_{i} } a_{ij\ell}^t \phi_{(j,\ell)}^t, \hspace{1.9cm} \ell \in \until{B},
  \\
  \bx_{(i,\ell)}^{t+1} & = 
  \dfrac{1}{\phi_{(i,\ell)}^{t+1}} 
  \sum_{j\in \innbrs_{i}}
   a_{ij\ell}^t \phi_{(j,\ell)}^t\,\bx_{(j,\ell)}^t, \hspace{0.2cm} \ell\in \{1,\ldots, B\},
\end{split}
\end{align}
%
where $\phi_{(i,\ell)}^0$ and $\bx_{(i,\ell)}^0$ are given, for all $i\in\until{N}$.

We study now under which conditions the block consensus protocol (\ref{eq:block_consensus}) reaches an asymptotic agreement. 

 For push-sum-like algorithms (as the one just stated) to achieve asymptotic
  consensus, the following standard assumption is needed.
\smallskip 
\begin{assumption}
  For all $\ell\in\until{B}$ and $t\geq 0$, the matrix $\bA_\ell^t$ is column stochastic, that is, $\mathbf{1}^\top \bA_\ell^t = \mathbf{1}^\top$.\oprocend
  \label{ass:col_stoch}
\end{assumption}
\smallskip 


We show next  how nodes can \emph{locally} build a matrix $\bA_\ell^t$ satisfying 
Assumption~\ref{ass:col_stoch} for each time-varying, directed graph $\GG_\ell^t$.
Since in our distributed optimization algorithm we work with a static,
strongly connected   digraph $\GG$ (cf.\,Assumption~\ref{ass:strong_conn}), we assume
that a column stochastic matrix $\tilde{\bA}$ that matches $\GG$ is available,
i.e., $\tilde{a}_{ij} > 0$ if $(j,i)\in\EE$ and $\tilde{a}_{ij} = 0$
otherwise, and $\mathbf{1}^\top \tilde{\bA} = \mathbf{1}^\top\tilde{\bA}$.

To show how $\bA_\ell^t$ can be constructed in a distributed way,
we start observing that at iteration $t$, an agent $j$ either sends a block $\ell$ to all
its out-neighbors in $\GG$, $\ell=\ell_j^t$, or to none, $\ell\neq\ell_j^t$.
Thus, let us concentrate on the $j$-th column $\bA_\ell^t(:,j)$ of  $\bA_\ell^t$. If
agent $j$ does not send
block $\ell$ at iteration $t$, $\ell\neq\ell_j^t$, then all elements of
$\bA_\ell^t(:,j)$ will be zero except $a_{jj\ell}^t$. Thus, to be the $j$-th column stochastic, it must be  $\mathbf{1}^\top  \bA_\ell^t(:,j) = a_{jj\ell}^t=1$ (i.e.,
$\bA_\ell^t(:,j)$ is the $j$-th vector of the canonical basis). 
Viceversa, if $j$ sends block $\ell$, all its out-neighbors in $\GG$ will
receive it and, thus, column $\bA_\ell^t(:,j)$ has the same nonzero entries as
column $\tilde{\bA}(:,j)$ of $\tilde{\bA}$. Since $\tilde{\bA}$ is column
stochastic, the same entries can be chosen, that is,
$\bA_\ell^t(:,j)=\tilde{\bA}(:j)$.
%
%
%
This rule can be stated from the point of view of each agent $i$ and its
in-neighbors, thus showing that each agent can locally construct its own
weights. 
For each $i\in\until{N}$ and $\ell\in\until{B}$, 
weights $a_{ij\ell}^t$ can be defined as
\begin{eqnarray}
  a_{ij\ell}^t \triangleq 
  \begin{cases}
	  \tilde{a}_{ij}, & \text{if } j \in \nbrs_i \text{ and } \ell = \ell_j^t,
	  \\
	  1,                    & \text{if } j=i \text{ and } \ell \neq \ell_j^t, 
	  \\
	  0,                    & \textnormal{otherwise}.
	\end{cases}
	\label{eq:weights}
\end{eqnarray}
%


Besides imposing $\bA_\ell^t$ to be column stochastic for each $t$, another key
aspect to achieve   consensus is that the time-varying digraphs
$\GG_\ell^t$ be $T$-strongly connected, i.e., for all $t\ge 0$ the union digraph
$\bigcup_{s=0}^{ T-1} \GG_\ell^{t+s}$ is strongly connected.

Since each (time-varying) digraph $\GG_\ell^t$ is induced by the block selection rule, its connectivity 
properties are determined by the block selection rule. Thus,   the $T$-strongly connectivity requirement imposes a condition on the way the blocks are selected. 
The following general  \emph{essentially cyclic}  rule is enough to meet this requirement. 
%
%
\begin{assumption}[Block Updating Rule]
	For each 	agent $i\in\until{N}$, there exists a (finite) constant $T_i>0$ such that \vspace{-0.3cm}
	\begin{equation*} 
	  \bigcup_{s=0}^{T_i-1} \{\ell_i^{t+s} \} = \until{B}, \text{ for all } t \ge 0.   \eqoprocend\vspace{-0.3cm}
  \end{equation*}
\label{ass:block_selection}
\end{assumption}
\smallskip 
Note that the above rule does not impose any coordination among the agents, but agents selects their own block independently. Therefore, at the same iteration, different agents may update different blocks. Moreover, some blocks can be updated more often than others. 
However, the rule  guarantees that, within a finite time window of length $T \le \max_{i\in\until{N}} T_i$
all the blocks have been updated at least once by all agents. This is enough for $\GG_\ell^t$ to be $T$-strongly connected, as stated next. 

%

\begin{proposition}
Under Assumption~\ref{ass:strong_conn} and \ref{ass:block_selection}, there exists a  $0<T\leq \max\limits_{i\in\until{N}} \!T_i$, such that $\bigcup_{s = 0}^{ T-1} \GG_\ell^{t+s}$, $\ell\in \{1,\ldots, B\}$, is   strongly connected, for all $t\geq 0$. 
\end{proposition}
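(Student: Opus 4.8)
The plan is to take $T := \max_{i\in\until{N}} T_i$, which is a finite positive integer because each $T_i$ is a finite positive constant and the agent set $\until{N}$ is finite; this $T$ trivially satisfies the required bound $0 < T \le \max_{i\in\until{N}} T_i$. The whole argument then reduces to a single bookkeeping observation about the edge sets. By construction $\EE_\ell^t \subseteq \EE$, and for an edge $(j,i)\in\EE$ with $j\neq i$ one has $(j,i)\in\EE_\ell^t$ if and only if $j\in\nbrs_{i,\ell}^t$, i.e.\ if and only if the \emph{tail} node $j$ broadcasts block $\ell$ at iteration $t$, namely $\ell_j^t=\ell$. (Any self-loop of $\GG$ is instead present in every $\EE_\ell^t$, since $i\in\nbrs_{i,\ell}^t$ always; self-loops play no role in strong connectivity anyway.) In words, the time-varying subgraphs $\GG_\ell^t$ never create edges outside $\GG$; they merely retain, at each time, those edges of $\GG$ whose transmitting endpoint has selected block $\ell$.

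With this in hand, I would fix an arbitrary block $\ell\in\until{B}$ and an arbitrary time $t\ge 0$ and prove the stronger identity $\bigcup_{s=0}^{T-1}\EE_\ell^{t+s}=\EE$. The inclusion $\subseteq$ is immediate from $\EE_\ell^{t+s}\subseteq\EE$. For the reverse inclusion, take any $(j,i)\in\EE$ and apply the Block Updating Rule (Assumption~\ref{ass:block_selection}) to the transmitting node $j$ over the window starting at $t$: since $\bigcup_{s=0}^{T_j-1}\{\ell_j^{t+s}\}=\until{B}$ and $\ell\in\until{B}$, there exists an index $s_\star\in\{0,\ldots,T_j-1\}\subseteq\{0,\ldots,T-1\}$ with $\ell_j^{t+s_\star}=\ell$. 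By the observation above this gives $(j,i)\in\EE_\ell^{t+s_\star}$, hence $(j,i)$ belongs to the union. As $(j,i)$ was arbitrary, $\EE\subseteq\bigcup_{s=0}^{T-1}\EE_\ell^{t+s}$, and the two inclusions give the claimed equality.

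It then follows that, as a digraph on the vertex set $\until{N}$, the union $\bigcup_{s=0}^{T-1}\GG_\ell^{t+s}$ has edge set exactly $\EE$, i.e.\ it coincides with $\GG$. Invoking Assumption~\ref{ass:strong_conn}, $\GG$ is strongly connected, hence so is $\bigcup_{s=0}^{T-1}\GG_\ell^{t+s}$. Since $\ell\in\until{B}$ and $t\ge 0$ were arbitrary, the conclusion holds for every block and every starting time, which completes the proof.

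I do not expect a genuine obstacle in this argument: the only point that requires care is reading the definition of $\EE_\ell^t$ correctly through the block-dependent neighbor set $\nbrs_{i,\ell}^t$, and in particular noticing that an edge of $\GG$ survives in $\GG_\ell^t$ exactly when its \emph{tail} (the node that transmits) selects block $\ell$ --- not its head. Everything else is the immediate combination of the finiteness of $\max_{i\in\until{N}} T_i$, the coverage guaranteed by the essentially cyclic Block Updating Rule, and the trivial fact that the union of the subgraphs recovers all the edges of the (strongly connected) parent graph $\GG$.
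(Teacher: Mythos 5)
Your proof is correct and follows essentially the same route as the paper's: both arguments use the essentially cyclic rule (Assumption~\ref{ass:block_selection}) to locate, for each edge $(j,i)\in\EE$, a time $t+s_\star$ within a window of length $T\le\max_{i\in\until{N}}T_i$ at which the transmitting node selects block $\ell$, so that $\bigcup_{s=0}^{T-1}\GG_\ell^{t+s}$ contains every edge of the strongly connected graph $\GG$. If anything, your write-up is slightly more careful on the one delicate point --- that it is the \emph{tail} $j$ of the edge $(j,i)$ whose block selection matters --- where the paper's proof indexes the selection time by $i$ rather than $j$.
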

 
\begin{proof}
Consider a particular block $\ell$, and define $s_i^t(\ell)$ as the last time agent $i$ sends block $\ell$ in the  time window $[t,t+T-1]$, where $t \geq 0$. The essentially cyclic rule (cf.\, Assumption~\ref{ass:block_selection}) implies that $s_i^t(\ell) \leq T-1 $ for all $i\in \until{N}$. By   definition of $\GG_\ell^t$, we have that any edge $(j,i) \in \EE$ also belongs to $G_\ell^{t+ s_i^t(\ell)}$. Since $\EE \subseteq \bigcup_{i\in\VV} \GG_\ell^{t+ s_i^t(\ell)}\subseteq \bigcup_{\tau=t}^{t+T-1} \GG_\ell^\tau$, we have $\bigcup_{s = 0}^{ T-1} \GG_\ell^{t+s}$ is strongly connected, since $\GG$ is so   (cf.\,Assumption~\ref{ass:strong_conn}).
\end{proof}\smallskip

Since $\{\GG_{\ell}^t\}_{t\in\natural}$ is $T$-strongly connected for all $\ell \in \until{B}$ and each $\bA_\ell^t$ is a column stochastic matrix matching    $\GG_{\ell}^t$, a direct application of \cite[Corollary 2]{nedic2015distributed} leads to the following  convergence results for the matrix product $\bA_\ell^{t+T:t} \triangleq \bA_\ell^{t+T} \bA_\ell^{t+T-1} \cdots \bA_\ell^t$.
  
\begin{proposition}
Suppose that  Assumptions~\ref{ass:strong_conn} and \ref{ass:block_selection} hold true, and let  $\bA_\ell^t$ be defined as in \eqref{eq:weights}, with $\ell\in \{1,\dots, B\}$. Then,  the matrix product $\bA_\ell^{t+T:t} \triangleq \bA_\ell^{t+T} \bA_\ell^{t+T-1} \cdots \bA_\ell^t$
  satisfies the geometrical decay property, i.e.,
  there exists a sequence of (real) stochastic vectors $   \{\bxi^t_\ell\}_{t\in\natural}$ such that   
  \begin{align*}
    \big | (\bA_\ell^{t+T:t})_{ij}   - (\bxi_\ell^t)_i \big | 
    \le c_\ell (\rho_\ell)^T,
  \end{align*}
  for some $c_\ell>0$ and $\rho_\ell\in (0,1)$. 
   \label{prop:induced_graphs}
\end{proposition}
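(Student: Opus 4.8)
The plan is to reduce the claim to the ergodicity result for products of column-stochastic matrices in \cite[Corollary 2]{nedic2015distributed}, by checking that the matrix sequence $\{\bA_\ell^t\}_{t\in\natural}$ produced by~\eqref{eq:weights} meets its three standing hypotheses: each factor is column stochastic, the positive entries (in particular the diagonal ones) are uniformly bounded away from zero by a common constant $\theta\in(0,1)$, and the associated graph sequence $\{\GG_\ell^t\}_{t\in\natural}$ is $T$-strongly connected. Once these are in place, the cited corollary supplies, for each block $\ell$, a sequence of stochastic vectors $\{\bxi_\ell^t\}_{t\in\natural}$ (an absolute probability sequence) and constants $c_\ell>0$, $\rho_\ell\in(0,1)$ for which the stated geometric decay holds.

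First I would verify column stochasticity. This is exactly Assumption~\ref{ass:col_stoch}, and it is guaranteed by the construction~\eqref{eq:weights}: when agent $j$ does not transmit block $\ell$ the $j$-th column is the $j$-th canonical vector, while when it does transmit, the column coincides with the $j$-th column of the column-stochastic matrix $\tilde{\bA}$; in either case $\mathbf{1}^\top \bA_\ell^t(:,j)=1$. Next I would establish the uniform lower bound. By~\eqref{eq:weights} every nonzero entry of $\bA_\ell^t$ belongs to the finite set $\{\tilde{a}_{ij}:(j,i)\in\EE\}\cup\{1\}$ of strictly positive reals; hence setting $\theta\triangleq\min_{(j,i)\in\EE}\tilde{a}_{ij}>0$ gives $a_{ij\ell}^t\in[\theta,1]$ whenever $a_{ij\ell}^t\neq 0$, uniformly in $i,j,\ell,t$. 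In particular the diagonal entries are always positive: either $j=i$ with $\ell=\ell_i^t$ selects $a_{ii\ell}^t=\tilde{a}_{ii}>0$ (since $i\in\nbrs_i$, so $(i,i)\in\EE$), or $\ell\neq\ell_i^t$ selects $a_{ii\ell}^t=1$. This positive-diagonal property is what the cited result uses to rule out periodic behaviour.

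Finally, the required $T$-strong connectivity of $\{\GG_\ell^t\}_{t\in\natural}$, for a common $T\le\max_{i\in\until{N}}T_i$, is precisely the content of the preceding Proposition, which holds under Assumptions~\ref{ass:strong_conn} and~\ref{ass:block_selection}. With column stochasticity, the uniform bound $\theta$, and $T$-strong connectivity all verified, \cite[Corollary 2]{nedic2015distributed} applies to the window product $\bA_\ell^{t+T:t}$ and yields the stochastic vectors $\bxi_\ell^t$ together with the bound $|(\bA_\ell^{t+T:t})_{ij}-(\bxi_\ell^t)_i|\le c_\ell(\rho_\ell)^T$. I expect the only real obstacle to be bookkeeping rather than mathematics: one must check that the hypotheses are quantified uniformly over both the block index $\ell$ and time $t$, so that a single pair $(c_\ell,\rho_\ell)$ serves for all $t$, and that the vector $\bxi_\ell^t$ of our statement is correctly identified with the absolute probability sequence of the cited corollary, with the decay read off as geometric in the number of factors of the product (here one connectivity window of length $T$).
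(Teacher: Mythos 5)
Your proposal is correct and follows essentially the same route as the paper: the paper likewise treats this as a direct application of \cite[Corollary 2]{nedic2015distributed}, justified by the column stochasticity of each $\bA_\ell^t$ and the $T$-strong connectivity of $\{\GG_\ell^t\}$ established in the preceding proposition. Your additional verification of the uniform lower bound $\theta$ on the nonzero entries and the positive diagonals is useful bookkeeping that the paper leaves implicit in its construction~\eqref{eq:weights}.
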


 Invoking  Proposition \ref{prop:induced_graphs},  we can finally obtain the desired convergence results for the proposed block consensus protocol (\ref{eq:block_consensus}). 
\begin{theorem} Consider the block consensus scheme (\ref{eq:block_consensus}), 
under Assumptions~\ref{ass:strong_conn} and \ref{ass:block_selection}, and let each  weight matrix $\bA_\ell^t$ be defined according to \eqref{eq:weights}. Then, the following holds: 
\begin{equation*}
  \lim_{t\to\infty} \Big\| 
    \bx_{(i,\ell)}^t  -\textstyle \sum\limits_{i=1}^N \phi_{(i,\ell)}^0 \bx_{(i,\ell)}^0 
  \Big \|=0,\: \: \forall\, \ell \in \until{B}. \eqoprocend
\end{equation*}\smallskip 
\end{theorem}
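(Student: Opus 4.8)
The plan is to recast the push-sum recursion \eqref{eq:block_consensus} into two linear systems driven by the \emph{same} matrix sequence $\{\bA_\ell^t\}$, and then read off the limit from the ergodicity of the matrix products furnished by Proposition~\ref{prop:induced_graphs}. Fix a block $\ell\in\until{B}$ and introduce the per-agent \emph{mass} variable $\bw_{(i,\ell)}^t \triangleq \phi_{(i,\ell)}^t\,\bx_{(i,\ell)}^t$. Multiplying the second line of \eqref{eq:block_consensus} by $\phi_{(i,\ell)}^{t+1}$ and using the first line, one gets $\bw_{(i,\ell)}^{t+1} = \sum_{j\in\innbrs_i} a_{ij\ell}^t\,\bw_{(j,\ell)}^t$, i.e., the mass obeys exactly the same update as $\phi$. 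Stacking the agents, the vector $\bphi_\ell^t \triangleq (\phi_{(i,\ell)}^t)_{i=1}^N$ and the matrix $\bw_\ell^t$ whose $i$-th row is $\bw_{(i,\ell)}^t$ satisfy $\bphi_\ell^{t+1}=\bA_\ell^t\bphi_\ell^t$ and $\bw_\ell^{t+1}=\bA_\ell^t\bw_\ell^t$, so that $\bphi_\ell^t = \bA_\ell^{t-1:0}\bphi_\ell^0$ and $\bw_\ell^t = \bA_\ell^{t-1:0}\bw_\ell^0$, while the quantity of interest is the ratio $\bx_{(i,\ell)}^t = \bw_{(i,\ell)}^t/\phi_{(i,\ell)}^t$.

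Next I would record the two conservation laws. Since each $\bA_\ell^t$ is column stochastic (Assumption~\ref{ass:col_stoch}), $\1^\top\bA_\ell^t=\1^\top$, and therefore $\1^\top\bphi_\ell^{t+1}=\1^\top\bphi_\ell^t$ and $\1^\top\bw_\ell^{t+1}=\1^\top\bw_\ell^t$; hence $\sum_{i} \phi_{(i,\ell)}^t = \sum_{i}\phi_{(i,\ell)}^0$ and $\sum_{i}\bw_{(i,\ell)}^t = \sum_{j}\phi_{(j,\ell)}^0\,\bx_{(j,\ell)}^0$ for every $t$. Then I would invoke Proposition~\ref{prop:induced_graphs}: composing its windowed estimate over $\lfloor t/T\rfloor$ consecutive windows shows that the columns of $\bA_\ell^{t-1:0}$ converge to a common stochastic vector at a geometric rate, i.e., there are $c_\ell>0$, $\bar\rho_\ell\in(0,1)$ and stochastic vectors $\bxi_\ell^t$ with $\big|[\bA_\ell^{t-1:0}]_{ij}-(\bxi_\ell^t)_i\big|\le c_\ell\,\bar\rho_\ell^{\,t}$. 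Combining with the product representations of the previous paragraph gives $\phi_{(i,\ell)}^t = (\bxi_\ell^t)_i\sum_{j}\phi_{(j,\ell)}^0 + O(\bar\rho_\ell^{\,t})$ and $\bw_{(i,\ell)}^t = (\bxi_\ell^t)_i\sum_{j}\phi_{(j,\ell)}^0\bx_{(j,\ell)}^0 + O(\bar\rho_\ell^{\,t})$, where crucially the \emph{same} factor $(\bxi_\ell^t)_i$ appears in both.

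Finally I would form the ratio and pass to the limit. Dividing the two displays,
\begin{equation*}
  \bx_{(i,\ell)}^t=\frac{\bw_{(i,\ell)}^t}{\phi_{(i,\ell)}^t}
  =\frac{(\bxi_\ell^t)_i\sum_{j}\phi_{(j,\ell)}^0\bx_{(j,\ell)}^0+O(\bar\rho_\ell^{\,t})}
         {(\bxi_\ell^t)_i\sum_{j}\phi_{(j,\ell)}^0+O(\bar\rho_\ell^{\,t})}
  \;\longrightarrow\; \frac{\sum_{j}\phi_{(j,\ell)}^0\bx_{(j,\ell)}^0}{\sum_{j}\phi_{(j,\ell)}^0},
\end{equation*}
the common factor $(\bxi_\ell^t)_i$ cancelling so that the limit is independent of $i$ (this is the consensus) and equals the asserted value once $\bphi_\ell^0$ is normalized so that $\sum_{j}\phi_{(j,\ell)}^0=1$. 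The \textbf{main obstacle}, and the only nontrivial point, is to justify dividing by $\phi_{(i,\ell)}^t$ and to guarantee that the $O(\bar\rho_\ell^{\,t})$ errors in numerator and denominator indeed pass to the ratio: this requires a \emph{uniform} strictly positive lower bound $\phi_{(i,\ell)}^t\ge \underline{\phi}>0$ for all $i$ and $t$. I would obtain it from the fact that the weights are bounded below by $\theta$ over the $T$-strongly connected graphs $\GG_\ell^t$ (cf. Proposition~\ref{prop:induced_graphs} and Assumption~\ref{ass:block_selection}), which forces $(\bxi_\ell^t)_i\ge\eta>0$ uniformly; together with the geometrically small perturbation this yields $\phi_{(i,\ell)}^t\ge \eta\sum_j\phi_{(j,\ell)}^0-c_\ell\bar\rho_\ell^{\,t}>0$ for $t$ large, closing the argument.
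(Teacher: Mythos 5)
Your proposal is correct and follows essentially the same route the paper intends: the paper gives no explicit proof, deriving the theorem directly by invoking Proposition~\ref{prop:induced_graphs} (the geometric ergodicity of the products $\bA_\ell^{t:0}$, imported from \cite{nedic2015distributed}), and your decomposition into the mass recursion $\bw_{(i,\ell)}^t=\phi_{(i,\ell)}^t\bx_{(i,\ell)}^t$ and the $\phi$-recursion, followed by the ratio argument, is exactly the standard push-sum analysis that this invocation stands for. Your two side remarks are also on point: the stated limit matches the push-sum limit only under the implicit normalization $\sum_{j}\phi_{(j,\ell)}^0=1$, and the uniform positive lower bound on $\phi_{(i,\ell)}^t$ (which follows from the self-loop weights being at least $\theta$ together with $T$-strong connectivity) is indeed the one nontrivial step needed to pass the geometric error bounds through the ratio.
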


\subsection{Algorithm design: \algfullname/}\label{Sec:B_SONATA}
We are now in the position to introduce our new big-data algorithm, which combines 
SONATA (suitably tailored to a block implementation) and the proposed block consensus scheme. 
Specifically,
%
each agent maintains a set of local and auxiliary variables, which are exactly the
same as in SONATA (cf.\,Section~\ref{subsec:preliminaries}),
namely $\bx_{(i)}^t$, $\bv_{(i)}^t$, $\bphi_{(i)}^t$, $\by_{(i)}^t$ and
$\tbpi_{(i)}^t$. Consistently with the block structure of the optimization
variable, we partition accordingly these variables. 
%
We denote by $\bx_{(i,\ell)}^t\in\real^{d}$ the $\ell$-th
block-component of local estimate $\bx_{(i)}^t$ that agent $i$ has at time $t$; we use the same notation for the blocks of the other vectors. 

%

Informally, agent $i$ first  performs a   minimization only with respect to the
block-variable it selects; 
%
%
then, it  performs the block-wise consensus update introduced in the
previous subsection. 
The \algacronym/ distributed algorithm is formally reported in the table below
(from the perspective of node $i$) and discussed in details afterwards.
Each $i$ initializes the local states as: $\bx_{(i)}^0$ to an arbitrary value,
$\bphi_{(i)}^0 \!=\! [1,\!\ldots\!,1]^\top \triangleq \1$, and
$\by_{(i)}^0 \!=\! \nabla f_i(\bx_{(i)}^0)$.

%
\begin{algorithm}[!ht]
\renewcommand{\thealgorithm}{}
\floatname{algorithm}{Distributed Algorithm}
  \begin{algorithmic}
      
    \StatexIndent[0] \textbf{Optimization}:

    \StatexIndent[0.25] Select $\ell_i^t \in \until{B}$ and compute
      \begin{align}
        \label{eq:alg_x_min}
        \begin{split}
        & \widetilde{\bx}_{(i,\ell_i^t)}^t \! \! \triangleq \!
          \argmin_{ \bx_{\ell_i^t} \in \KK_{\ell_i^t} } \!
          \tf_{i,\ell_i^t} \big(\bx_{\ell_i^t}; \bx_{(i)}^t \big)
          \!+\! ( \bx_{\ell_i^t} \!-\! \bx_{(i,\ell_i^t)}^t )^{\! \top} \! \widetilde{\bpi}_{(i,\ell_i^t)}^t
          \!+\!
          g_{\ell_i^t} ( \bx_{\ell_i^t} ),
        \end{split}
        \\
        \label{eq:alg_v}
        & \bv_{(i,\ell_i^t)}^{t} = \bx_{(i,\ell_i^t)}^t + \gamma^t ( \widetilde{\bx}_{(i,\ell_i^t)}^t - \bx_{(i,\ell_i^t)}^t );
      \end{align}

    \medskip
    
    \StatexIndent[0] \textbf{Communication}: 
     \StatexIndent[0.25] Broadcast $\bv_{(i,\ell_i^t)}^t$,
      $\phi_{(j,\ell_i^t)}^t$, $\by_{(j,\ell_i^t)}^t$ to the
      out-neighbors 
    \StatexIndent[0.25] \textsc{for} \!$\ell \!\in \!\until{B}$:\!
    receive $\phi_{(j,\ell)}^t$,$\bv_{(j,\ell)}^t$,%
    $\by_{(j,\ell)}^t$ from $j \!\in\! \nbrs_{i,\ell}^t$, and update:
    \begin{align}
      \phi_{(i,\ell)}^{t+1} & = \sum_{j \in \innbrs_i } a_{ij\ell}^t \,\phi_{(j,\ell)}^t,
    \label{eq:alg_phi_update}
    \\
      \bx_{(i,\ell)}^{t+1} & = 
      \dfrac{1}{\phi_{(i,\ell)}^{t+1}} \sum_{j\in \innbrs_i } a_{ij\ell}^t\, \phi_{(j,\ell)}^t \bv_{(j,\ell)}^t,
    \label{eq:alg_x_update}
    \\
    \begin{split}
      \by_{(i,\ell)}^{t+1} & = 
      \dfrac{1}{\phi_{(i,\ell)}^{t+1}} 
      \Big(\sum_{j \in \innbrs_i}
      \!\!
       a_{ij\ell}^t\, \phi_{(j,\ell)}^t \by_{(j,\ell)}^t 
      +  \nabla_\ell f_i \big( \bx_{(i)}^{t+1} \big) 
       -
       \nabla_\ell f_i \big( \bx_{(i)}^{t} \big) \Big)
    \end{split}
    \label{eq:alg_y_update}
      \\
      \widetilde{\bpi }_{(i,\ell)}^{t+1} &= N \cdot \by_{(i,\ell)}^{t+1} - \nabla_\ell f_i \big( \bx_{(i)}^{t+1} \big).
    \label{eq:alg_pi_update}
    \end{align}

  \end{algorithmic}
  \caption{\algfullname/}
  \label{alg:algorithm}
\end{algorithm}

We  discuss now the steps of the algorithm. 
At  iteration $t$, each agent $i$ selects a block $\ell_i^t \in \until{B}$
according to a rule satisfying Assumption~\ref{ass:block_selection}.
Then, a local approximation of problem~\eqref{eq:problem} is constructed by (i)
replacing the nonconvex cost $f_i$ with a strongly convex surrogate
$\tilde{f}_{(i,\ell_i^t)}$ depending on the current iterate $\bx_{(i)}^t$,
and, (ii) adding a gradient estimate $\widetilde{\bpi}_{(i,\ell_i^t)}^t$ of the
remainder cost functions $f_j$, with $j\neq i$. How the surrogate can be
constructed will be clarified in the next subsection.
Agent $i$ then solves problem~\eqref{eq:alg_x_min} with respect to its own  block  $\bx_{\ell_i^t}$  only, and then 
using the solution $\widetilde{\bx}_{(i,\ell_i^t) }^t$ it updates only the
$\ell_i^t$-th block of the auxiliary state $\bv_{(i,\ell_i^t)}^{t}$, according to \eqref{eq:alg_v}, where 
  $\gamma^t$ is a suitably chosen  (diminishing) step-size.
After this update, each node $i$ broadcasts  to its out-neighbors only
$\bv_{(i,\ell_i^t)}^{t}$. We remark that this, together with $\by_{(i,\ell_i^t)}^{t}$ and $\phi_{(i,\ell_i^t)}^t$, are  the sole quantities sent by
agent $i$ to its neighbors.

As for the consensus step, agent $i$ updates block-wise its local variables by
means of the novel block-wise consensus protocol described in
Section~\ref{subsec:block_consensus}. 

We conclude this discussion by pointing out that in order to perform
update~\eqref{eq:alg_y_update}, agent $i$ needs to update all blocks of
$\bx_{(i)}^{t+1}$. And the same holds for~\eqref{eq:alg_pi_update} and
$\by_{(i)}^{t+1}$. However, these updates involve only the blocks $\ell_j^t$
with $j\in\nbrs_i$.

\subsection{Convergence of \algacronym/}
We provide now the main convergence result of \algacronym/.
Convergence is guaranteed under mild (quite standard) assumptions on the surrogate 
functions $\tf_{i,\ell}$ [cf. (\ref{eq:alg_x_min})] and the step-size sequence 
$\{\gamma^t\}$ [cf. (\ref{eq:alg_v})]. Specifically, we need the following. 
%
\begin{assumption}[On the surrogate functions]
Given problem~\eqref{eq:problem} under Assumption~\ref{ass:cost_functions}, 
each surrogate function 
$\tf_{i,\ell}: \KK_\ell\times \KK\rightarrow \real$ is chosen so that 
\begin{enumerate}
\item 
  $\tf_{i,\ell} (\bullet;\bx)$ is uniformly strongly convex with constant $\tau_i >0$
  on $\KK_\ell$;
\item
  $\nabla \tf_{i,\ell} (\bx_{\ell};\bx) = \nabla_\ell f_i (\bx)$, for all $\bx \in \KK$;
\item
  $\nabla \tf_{i,\ell} (\bx_{\ell}; \bullet)$ is uniformly Lipschitz continuous on $\KK$;
\end{enumerate}
where $\nabla \tf_{i,\ell}$ denotes the partial gradient of $\tf_{i,\ell}$ 
with respect to its first argument.~\oprocend
\label{ass:surrogate}
\end{assumption}

Conditions (i)-(iii) above are mild assumptions: $\tf_{i,\ell} (\bullet;\bx)$ should be regarded as
a (simple) convex, local, approximation of $f_i(\bullet,\bx_{-\ell})$  that preserves the first order 
properties of $f_i$ at the point $\bx$, where 
$\bx_{-\ell} \triangleq (\bx_{1},\ldots, \bx_{\ell-1}, \bx_{\ell+1},\ldots ,\bx_{B})$. 
Condition (iii) is a simple Lipschitzianity requirement
that is readily satisfied if, for example, the set $\KK$ is bounded. Several valid
instances of $\tf_{i,\ell}$ are possible for a given $f_i$; the appropriate one depends
on the problem at hand and computational requirements. 
We briefly discuss some valid surrogates below and
refer the reader to~\cite{facchinei2015parallel} and \cite{dilorenzo2016next} for further examples.
Given $\bx_{(i)}^t\in \KK$, a choice that is always possible is 
$\tilde f_{i,\ell} (\bx_{\ell}; \bx_{(i)}^t) = 
\nabla_\ell f_{i}( \bx_{(i)}^t )^\top (\bx_{\ell} - \bx_{(i,\ell)}^t ) 
+ \tau_i \| \bx_{\ell} - \bx_{(i,\ell)}^t \|^2$,
where $ \tau_i$ is a
positive constant, which leads to  the classical (block) proximal-gradient update. However, one can go  
beyond the proximal-gradient choice using $\tf_{i,\ell}$ that better exploit the structure of $f_i$; 
some examples are the following:\smallskip
 
  \noindent $\bullet$ If $f_i$ is block-wise uniformly convex, instead of linearizing $f_i$
	one can employ a second-order approximation and set
	$\tf_{i,\ell}( \bx_{\ell}; \bx_{(i)}^t) = 
	f_i(\bx_{(i)}^t) + \nabla_\ell f_i ( \bx_{(i)}^t )^\top (\bx_{\ell}  -\bx^t_{(i,\ell)})+\frac{1}{2}
	( \bx_{\ell} - \bx^t_{(i,\ell)})^\top \nabla_{\ell\ell}^2\,f_i (\bx_{(i)}^t)
	(\bx_{\ell}-\bx^t_{(i,\ell)} )+\tau_i\,\|\bx_{\ell}-\bx_{(i,\ell)}^t\|^2$;
	
	\noindent$\bullet$ In the same setting as above, one can also better preserve the partial convexity of $f_i$ and set
	$\tilde{f}_{i,\ell}( \bx_{\ell};\bx_{(i)}^t) = f_i ( \bx_{\ell}, \bx_{(i,-\ell)}^t)+\tau_i \|\bx_{\ell}-\bx_{(i,\ell)}^t\|^2$;
	  	
	\noindent$\bullet$ As a last example, suppose that $f_i$ is the difference of two convex functions $f^{(a)}_i$ and
	$f_i^{(b)}$, i.e., $f_i = f^{(a)}_i - f^{(b)} $, one can preserve the partial convexity in $f_i$ by setting
	$\tf_{i,\ell} ( \bx_\ell;\bx_{(i)}^t) = f^{(a)}_i( \bx_\ell, \bx_{(i,-	\ell)}^t) 
	-
	\nabla_\ell f^{(b)}_i( \bx_{(i)}^t)^{\!\top}\! (\bx_\ell-\bx_{(i,\ell)}^t ) 
	+
	\tau_i\|	\bx_\ell -\bx_{(i,\ell)}^t \|^2\!\!.$\smallskip

%
As for the step-size $\{\gamma^t\}$, we need the following.
\begin{assumption}[On the step-size]
The sequence $\{\gamma^t\}$, with each $0< {\gamma^t} \le 1$, satisfies:
 \begin{enumerate}
\item  
 $\eta\gamma^t\le  \gamma^{t+1} \le \gamma^t$,
  for all $t\geq 0$ and  some $\eta\in (0,1)$;\smallskip 
  \item   $\sum\limits_{t=0}^{\infty}\gamma^t = \infty$ and $
  \sum\limits_{t=0}^{\infty} (\gamma^t)^2 < \infty$.
  \oprocend
  \end{enumerate} 
\label{ass:step-size}
\end{assumption}
Condition (ii) is standard and satisfied by most practical diminishing stepsize rules. 
The upper bound condition in (i) just states that the sequence is nonincreasing 
whereas the lower bound condition dictates that $\sum_{t=0}^\kappa \gamma^t \geq \gamma^0
(\eta^0+\eta^1 + \cdots + \eta^\kappa)$, that is,
the partial sums $\sum_{t=0}^\kappa \gamma^t$ must be minorized by a {\em convergent}
geometric series. This impose a maximum decay rate to $\{\gamma^t\}$. Given that
$\sum_{t=0}^\infty \gamma^t = +\infty,$ the aforementioned requirement is clearly very mild 
and indeed it is   satisfied by most classical diminishing stepsize rules. For example, the following 
rule    satisfies Assumption \ref{ass:step-size} and has been found 
very effective  in our experiments \cite{facchinei2015parallel}: $\gamma^{t+1} = \gamma^t\left(1-\mu \gamma^t \right),$ 
with $\gamma^0\in\left(0,1\right]$ and  $\mu\in\left(0,1/\gamma^0\right)$.

We are now in the position to state the main convergence result, as given below, 
where we introduced the block diagonal weight matrix 
$\boldsymbol{\phi}_{(i)}^t\triangleq  \text{diag}\{(\phi_{(i,\ell)}^t)_{\ell=1}^B\}\kron \bI_d$,
 where $\text{diag}\{(\phi_{(i,\ell)}^t)_{\ell=1}^B\}$ denotes the diagonal matrix whose 
 diagonal entries are the components   $\phi_{(i,\ell)}$, for $\ell=1,\ldots, B$, and 
 $\kron$ is the Kronecker product. 

\begin{theorem}
  Let  $\{(\bx_{(i)}^t)_{i=1}^N\}$ 
  be the sequence  generated 
  by \algacronym/, and let  $\avg^t \triangleq (1/N)\,\sum_{i=1}^N \boldsymbol{\phi}_{(i)}^t \bx_{(i)}^t$. 
  Suppose that Assumptions~\ref{ass:cost_functions}, \ref{ass:strong_conn}, \ref{ass:col_stoch}, \ref{ass:block_selection}, \ref{ass:surrogate}, and \ref{ass:step-size} are satisfied. 
Then, the following hold:
  
  \noindent (i) \texttt{convergence}: $\{\avg^t\}$ is bounded and every of its  limit points  
  is a stationary solution of problem~\eqref{eq:problem};

  \noindent (ii) \texttt{consensus}: $\| \bx_{(i)}^t - \avg^t \| \to 0$ as $t\to \infty$, for all $i\in\until{N}$;
  \oprocend
  \label{thm:convergence}
\end{theorem}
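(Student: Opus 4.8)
The plan is to follow the three-pillar structure of the NEXT/SONATA convergence analysis (cf.\ \cite{dilorenzo2016next,sun2017distributed}), adapted to the block-wise, time-varying setting, namely: (a) an exact aggregate gradient-tracking identity, (b) a consensus-contraction estimate, and (c) a sufficient-decrease argument for $U$ along the averaged trajectory $\avg^t$. Throughout I would stack the per-block quantities and argue block by block, invoking Proposition~\ref{prop:induced_graphs} separately for each $\ell$.

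First I would establish the tracking identity. Summing the update \eqref{eq:alg_y_update} over $i$ and using column stochasticity (Assumption~\ref{ass:col_stoch}), the telescoping together with the initialization $\by_{(i)}^0=\nabla f_i(\bx_{(i)}^0)$, $\bphi_{(i)}^0=\1$ yields, for every block $\ell$ and all $t$,
\begin{equation*}
  \sum_{i=1}^N \phi_{(i,\ell)}^t\,\by_{(i,\ell)}^t = \sum_{i=1}^N \nabla_\ell f_i(\bx_{(i)}^t),
\end{equation*}
i.e.\ the weighted network sum of the $\by$'s reproduces exactly the sum of local block-gradients. Combined with the geometric decay of $\bA_\ell^{t+T:t}$ toward a rank-one limit (Proposition~\ref{prop:induced_graphs}), this shows that each $\by_{(i,\ell)}^t$ asymptotically agrees with the gradient average $\tfrac1N\sum_j\nabla_\ell f_j(\bx_{(j)}^t)$, so that the correction $\tbpi_{(i,\ell)}^t$ in \eqref{eq:alg_pi_update} asymptotically tracks $\sum_{j\ne i}\nabla_\ell f_j$. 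An analogous mass-conservation identity for \eqref{eq:alg_x_update}, namely $\sum_i\phi_{(i,\ell)}^{t+1}\bx_{(i,\ell)}^{t+1}=\sum_i\phi_{(i,\ell)}^t\bv_{(i,\ell)}^t$, drives the consensus estimate: since $\bv$ differs from $\bx$ only in the selected block and only by the $O(\gamma^t)$ increment \eqref{eq:alg_v}, Proposition~\ref{prop:induced_graphs} gives a disagreement bound of the form $\|\bx_{(i,\ell)}^t-\avg_\ell^t\|\le C\sum_{s=0}^{t-1}(\rho_\ell)^{t-s}\gamma^s\,\|\tbx_{(i,\ell)}^s-\bx_{(i,\ell)}^s\|$, which will later be shown to vanish. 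Summing over $\ell$ this proves claim (ii).

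The core is the descent argument, and it is where the block-wise asynchrony bites. Using the uniform strong convexity of $\tf_{i,\ell}$ (Assumption~\ref{ass:surrogate}(i)) and the optimality condition for \eqref{eq:alg_x_min}, the best-response step produces, for the block $\ell_i^t$ actually updated, a decrease of the surrogate proportional to $\|\tbx_{(i,\ell_i^t)}^t-\bx_{(i,\ell_i^t)}^t\|^2$. Transferring this into a decrease of $U(\avg^t)$ requires controlling three perturbations: the gradient-tracking error (pillar (a)), the consensus disagreement (pillar (b)), and the mismatch between the local gradient $\nabla_\ell f_i(\bx_{(i)}^t)$ entering the surrogate and the aggregate gradient evaluated at $\avg^t$, which the Lipschitz property (Assumption~\ref{ass:cost_functions}(iii)) bounds by the consensus error. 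Because at each iteration only one block per agent moves, I would sum the per-iteration descent over windows of length $T$ and invoke the essentially-cyclic rule (Assumption~\ref{ass:block_selection}) to guarantee that, within each window, every block is visited; this is exactly where the lower-bound condition $\gamma^{t+1}\ge\eta\gamma^t$ of Assumption~\ref{ass:step-size}(i) enters, since it keeps the step-sizes at the (delayed) update times comparable, within the factor $\eta^{-T}$, to the current one. Summability of $\{(\gamma^t)^2\}$ then makes all perturbation terms summable and yields $\sum_t\gamma^t\sum_{i}\|\tbx_{(i,\ell_i^t)}^t-\bx_{(i,\ell_i^t)}^t\|^2<\infty$.

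Finally I would conclude. Coercivity of $U$ (Assumption~\ref{ass:cost_functions}(v)) together with the descent inequality gives boundedness of $\{\avg^t\}$. The summability above, combined with $\sum_t\gamma^t=\infty$ and the regularity of the best-response map, forces $\|\tbx_{(i,\ell)}^t-\bx_{(i,\ell)}^t\|\to0$ for every block; together with pillars (a)--(b) this says the fixed-point residual of the best-response map evaluated at $\avg^t$ tends to zero. By continuity of that map, every limit point of $\avg^t$ satisfies the best-response fixed-point condition, which is precisely stationarity for \eqref{eq:problem}, giving claim (i). I expect the descent step --- reconciling single-block updates, delayed step-sizes, and the three perturbation sources over a sliding window --- to be the main obstacle; the tracking and consensus estimates are comparatively routine once the exact aggregate identities are in hand.
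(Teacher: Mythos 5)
The paper states Theorem~\ref{thm:convergence} without proof (the result is announced with \oprocend{} and the analysis is deferred to the cited NEXT/SONATA literature), so there is no in-paper argument to compare yours against; what can be said is that your three-pillar sketch is exactly the natural block-wise adaptation of the analysis in \cite{dilorenzo2016next,sun2017distributed} that the authors evidently have in mind. Your two aggregate identities check out: multiplying \eqref{eq:alg_y_update} by $\phi_{(i,\ell)}^{t+1}$, summing over $i$, and using $\1^\top\bA_\ell^t=\1^\top$ telescopes to $\sum_i\phi_{(i,\ell)}^t\by_{(i,\ell)}^t=\sum_i\nabla_\ell f_i(\bx_{(i)}^t)$, and the analogous mass conservation holds for \eqref{eq:alg_x_update}; moreover $\sum_i\phi_{(i,\ell)}^t=N$ for all $t$, so $\avg^t$ is a convex combination of feasible points and coercivity on $\KK$ indeed yields boundedness. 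Your identification of the windowed descent step, with the $\gamma^{t+1}\ge\eta\gamma^t$ condition compensating for the staleness of single-block updates across an essentially-cyclic cycle, is the right place to locate the main technical work.

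Two points in your sketch are stated more casually than they can be executed. First, turning the invariant $\sum_i\phi_{(i,\ell)}^t\by_{(i,\ell)}^t=\sum_i\nabla_\ell f_i(\bx_{(i)}^t)$ into per-agent tracking requires a uniform positive lower bound $\phi_{(i,\ell)}^t\ge\phi_{\min}>0$; this is a standard but nontrivial consequence of the weight construction \eqref{eq:weights} and $T$-strong connectivity, and without it the division by $\phi_{(i,\ell)}^{t+1}$ in \eqref{eq:alg_x_update}--\eqref{eq:alg_y_update} is uncontrolled. Second, the statement asserts that \emph{every} limit point of $\avg^t$ is stationary, whereas $\sum_t\gamma^t\|\tbx_{(i,\ell_i^t)}^t-\bx_{(i,\ell_i^t)}^t\|^2<\infty$ together with $\sum_t\gamma^t=\infty$ only gives a $\liminf$; the upgrade to a full limit needs the standard contradiction argument exploiting that $t\mapsto\|\tbx_{(i)}^t-\bx_{(i)}^t\|$ has increments of order $\gamma^t$ (Lipschitz continuity of the best-response map plus $\|\bx_{(i)}^{t+1}-\bx_{(i)}^t\|=O(\gamma^t)$), and in the block setting this must be run per block over windows of length $T$. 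Both gaps are fillable by known techniques, so I regard your proposal as a correct architecture rather than a complete proof.
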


Theorem~\ref{thm:convergence} states two results. 
First, every limit point of the weighted average   $\avg^t$ belongs to the set $\cal S$ of stationary solutions of 
problem~\eqref{eq:problem}. Second, consensus is asymptotically achieved among the local 
estimates $\bx_{(i)}^t$ over all the blocks. Therefore, every limit point of the sequence $\{(\bx_{(i)}^t)_{i=1}^N\}$ converges 
to the set $\{\1_N \kron \bx^\ast\, : \bx^\ast \in \cal S\}$. In particular, if $U$ in~\eqref{eq:problem}
is convex, \algacronym/ converges (in the aforementioned sense) to the set of global optimal solutions 
of the convex problem.


\section{Application to Sparse Regression}
\label{sec:simulations}
In this section we apply \algacronym/ to the distributed sparse regression
problem. Consider a network of $N$ agents taking linear measurements of 
a sparse signal $\bx_0\in \real^m$, with data  matrix $\bD_i\in \real^{n_i \times m}$. 
The observation taken by agent $i$ can be expressed as $\bb_i = \bD_i\bx_0 + \bn_i$,
where $\bn_i\in \real^{n_i}$ accounts for the measurement noise. To estimate the underlying 
signal $\bx_0$, we formulate the problem  as:
\begin{align}
\label{regression}
  \min_{\bx \in \KK} \: & \: 
  \sum\limits_{i=1}^N \underbrace{\| \bD_i \bx - \bb_i \|^2_2}_{ f_i(\bx) }
 \:+\: \lambda \cdot g(\bx),
\end{align}
where $\bx \in \real^{m}$; $\KK$ is the box constraint set $\KK \triangleq [k_L,k_U]^m$, with $k_L\leq k_U$; 
and $g:\real^m \to \real$ is a difference-of-convex (DC) sparsity-promoting regularizer,  given by 
\begin{equation*}
  g(\bx) \triangleq \sum_{j=1}^m g_0(x_j),\quad g_0(x_j) \triangleq \frac{\log(1+\theta |x_j|)}{\log(1+\theta)}.
\end{equation*}

The first step to apply \algfullname/ is to build a valid surrogate $\tilde{f}_{i,\ell}$ of $f_i$ (cf. Assumption \ref{ass:surrogate}). To this end, we first   rewrite   $g_0$ as a DC function:
\begin{equation*}
g_0(x) = \underbrace{\eta(\theta) |x|}_{g_0^+(x)} - \underbrace{(\eta(\theta)|x| - g_0(x))}_{g_0^-(x)},
\end{equation*}
where $g_0^+ : \real \to \real $ is  convex non-smooth with 
\begin{equation*}
\eta(\theta ) \triangleq \frac{\theta}{\log(1+\theta)},
\end{equation*}
and $g_0^- : \real \to \real $ is convex and has Lipschitz continuous first order derivative given by  
\begin{equation*}
  \frac{d g_0^-}{dx} (x) = \sign(x)\cdot \frac{\theta^2 |x|}{\log(1+\theta)(1+\theta |x|)}.
\end{equation*}

Denoting the coordinates associate with block $\ell$ as $\II_\ell$,   define matrix $\bD_{i,\ell}$  [resp. $\bD_{i,-\ell}$]  constructed by picking  the columns of $\bD_i$ that belong [resp. do not belong] to $\II_\ell$. Then, the following functions are two  valid  surrogate functions.

\noindent$-$ \textit{Partial linearization:} Since $f_i$ is convex, a first natural choice to satisfy Assumption \ref{ass:surrogate} is to keep   $f_i$ unaltered while  linearizing the nonconvex part in $g_0$, which leads to the following surrogate  
\begin{equation}
\begin{aligned}
 &\tilde{f}^{PL}_{i,\ell}(\bx_{(i,\ell)};\bx_{(i)}^t) \\
    &= \| \bD_{i,\ell} \bx_{(i,\ell)} + \bD_{i,-\ell} \bx_{(i,-\ell)}^t  - \bb_i \|^2_2  \\
 & \quad+ \frac{\tau_i^{PL}}{2} \| \bx_{(i,\ell)} - \bx_{(i,\ell)}^t\|^2 - \sum_{j\in \II_\ell} \left( w_{ij}^t(x_{(i,j)} - x_{(i,j)}^t) \right),
\end{aligned}\label{eq:sparse_reg_PL}
\end{equation}
with $w_{ij}^t \triangleq \frac{d g_0^-}{dx} (x_{(i,j)} ^t)$.

 
\noindent$-$ \textit{Linearization:}  An alternative valid  surrogate can be obtained by linearizing also $f_i$,   which leads to 
\begin{equation}
\begin{aligned}
 & \tilde{f}^{L}_{i,\ell}(\bx_{(i,\ell)};\bx_{(i)}^t) \\
&=   \left(2\bD_{i,\ell}^\top (\bD_i - \bb_i)\right)^{\!\!\top}\! (\bx_{(i,\ell)} - \bx_{(i,\ell)}^t) \!+\! \frac{\tau_i^{L}}{2} \| \bx_{(i,\ell)} - \bx_{(i,\ell)}^t\|^2\\
& \quad- \sum_{j\in \II_\ell} \left( w_{ij}^t(x_{(i,j)} - x_{(i,j)}^t) \right),
\end{aligned}\label{eq:sparse_reg_L}
\end{equation}
where $w_{ij}^t$ is defined  as in \eqref{eq:sparse_reg_PL}.

Note that the minimizer of $\tilde{f}^{L}_{i,\ell}$ can be computed in closed form, and it is given by 
\begin{equation*}
  \bx_{(i,\ell)}^{t+1} = \mathcal{P}_{\KK_\ell} \left( \! \SS_{\frac{\lambda \eta}{\tau_i^L} } 
  \Big\{ \bx_{(i,\ell)}^t -\! \frac{1}{\tau_i^L}(2\bD_{i,\ell}^T (\bD_i - \bb_i) - \bw^t_{i,\ell})
  \Big\}
  \! \! \right)\!,
\end{equation*}
where $\bw^t_{i,\ell} \triangleq (w_{ij}^t)_{j\in \II_\ell}$, $\SS_\lambda(\bx) \triangleq \sign (\bx) \cdot \max\{ |\bx| - \lambda,0\}$ 
(operations are performed element-wise), and $\mathcal{P}_{\KK_\ell}$ is the Euclidean projection  onto the convex set $\KK_\ell$.

We term the two versions of  \algacronym/ based on \eqref{eq:sparse_reg_PL} and \eqref{eq:sparse_reg_L} 
\algacronym/-PL and \algacronym/-L, respectively.

We test our algorithms under the following   simulation set-up.  
The variable dimension $m$ is set to be $2000$, $\KK=[-10,10]^{2000}$, and the regularization parameters are set to  $\lambda=0.1$
and $\theta=20$. The network is composed of $N=50$ agents, communicating over a fixed undirected graph $\GG$,
generated using an Erd\H{o}s-R\'enyi random having algebraic connectivity $6$. The components of the ground-truth signal $\bx_0$ are i.i.d.   generated according to  the Normal distribution  $\NN(0,1)$. To impose sparsity on $\bx_0$, we set the smallest $80\%$ of the entries of $\bx_0$ to zero.
Each agent $i$ 
has a measurement matrix $\bD_i \in \real^{400\times 2000}$ with 
i.i.d.  $\NN(0,1)$ distributed entries   (with $\ell_2$-normalized rows), and 
 the observation  noise $\bn_i$ has entries i.i.d. distributed according to $\NN (0,0.1)$.

 {
We compare   \algacronym/-PL and \algacronym/-L   with a non-block-wise distributed (sub)-gradient-projection
algorithm, constructed by adapting the sub-gradient-push in~\cite{nedic2015distributed} to a constrained nonconvex problem 
according to the protocol   in~\cite{bianchi2013convergence}.} We term such a scheme D-Grad. 
Note that there is no formal proof of convergence of D-Grad in the nonconvex setting.   
We used the following tuning for the algorithms. 
The diminishing step-size is chosen as
\begin{align*}
  \gamma^t = \gamma^{t-1} (1 - \mu \gamma^{t-1}),
\end{align*}
with $\gamma^0 = 0.5$ and $\mu =10^{-5}$; the proximal parameter for  \algacronym/-PL and  \algacronym/-L   is chosen as 
$\tau_i^{PL} = 3.5$ and $\tau_i^L = 4.5$, respectively.
To evaluate the algorithmic performance we use two merit functions.  One measures  the distance from stationarity
of the average of the agents' iterates $\avg^t$ (cf. Th. \ref{thm:convergence}), and is defined as 
\begin{align*}
	J^t \triangleq 
	\Big \| \avg^t - \mathcal{P}_{\KK}\big(\SS_{\eta\lambda} 
	\big( \avg^t  - (\textstyle \sum \limits_{i=1}^N \nabla f_i (\avg^t ) - g(\avg^t) )\big) \big)
	\Big \|_\infty.
\end{align*}
Note that $J^t$ is a valid merit function: it is continuous and it is zero if and only if its argument  is a stationary 
solution of Problem~\eqref{regression}. 
The second merit function quantifies the    consensus disagreement at each iteration, and is defined as 
 \begin{align*}
  D^t \triangleq \max_{i\in\until{N}} \| \bx_{(i)}^t - \avg^t \|.
\end{align*}

The performance  of \algacronym/-PL and \algacronym/-L for different choices of the block dimension are reported 
in Figure~\ref{fig:convergence_disagreement_rate_PL} and Figure~\ref{fig:convergence_disagreement_rate_L}, respectively. 
{Recalling that $t$ is the iteration counter used in the
algorithm description, to fairly compare the algorithms' runs for different block
sizes, we plot $J^t$ and $D^t$, versus the normalized number of iterations $t/B$.}
The figures show that both  consensus and stationarity 
are achieved by \algacronym/-PL and \algacronym/-L within  $200$ message exchanges while  D-Grad  lacks behind . 

 {
Let $t_{\text{end}}$ be the completion time up to a tolerance of $10^{-4}$, i.e.,
  the number of iterations $t_{\text{end}}$ of the   algorithm such that
$J^{t_{\text{end}}} <10^{-4}$. Fig.~\ref{fig:blk_tradeoff} shows   $t_{\text{end}} / B$ versus the number of blocks $B$,
for \algacronym/-PL and \algacronym/-L.
The figure shows   that the communication cost reduces by increasing the number of
blocks, validating thus proposed block optimization/communication strategy. 
Note also that \algacronym/-PL outperforms \algacronym/-L.
 This is due to the fact that \algacronym/-PL better preserves the partial 
convexity of the objective function.
}
\begin{figure}[!htbp]
  \centering
	\includegraphics[scale=1]{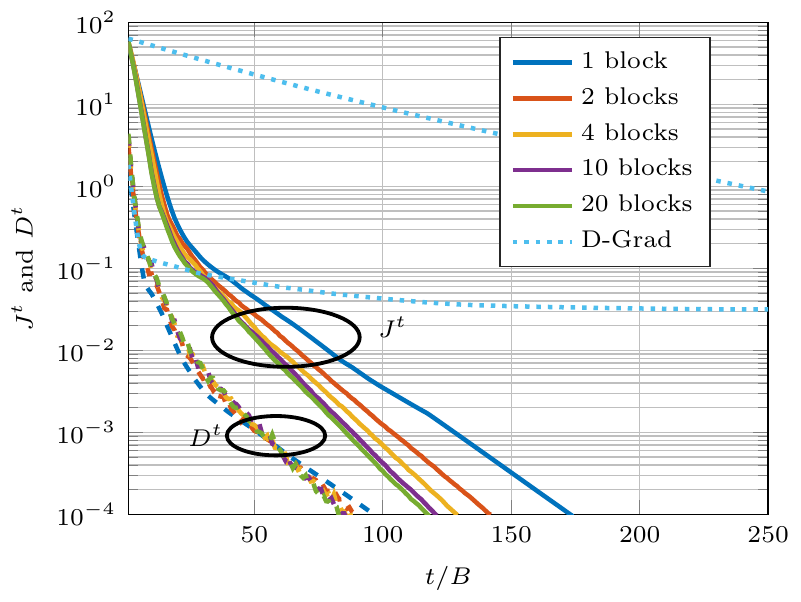}
\caption{
 Distance from stationarity $J^t$ (solid) and consensus disagreement  $D^t$ (dashed)
  versus the normalized number of iterations for several choices of the blocks' 
  number $B$ of \algacronym/-PL.
  }
\label{fig:convergence_disagreement_rate_PL}
\end{figure}

\begin{figure}[!htbp]
  \centering
	\includegraphics[scale=1]{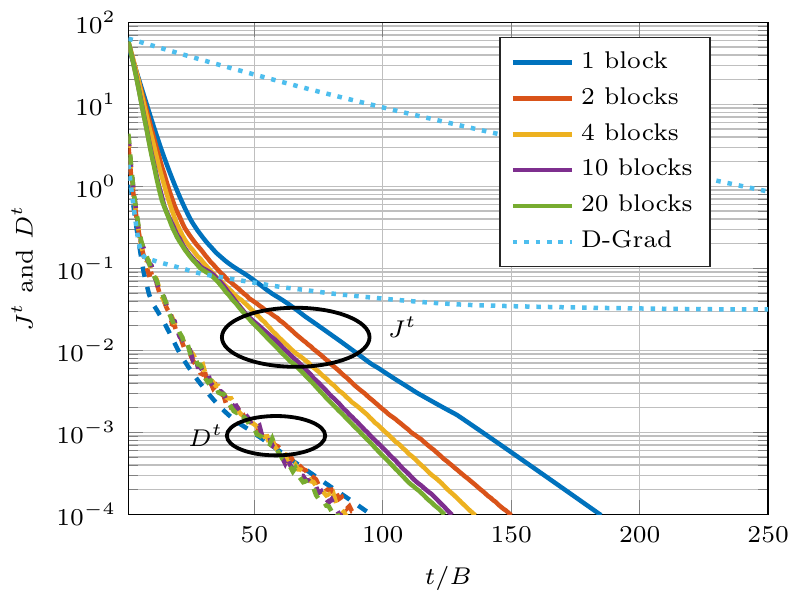}
\caption{
 Distance from stationarity $J^t$ (solid) and consensus disagreement  $D^t$ (dashed)
  versus the normalized number of iterations for several choices of the blocks' 
  number $B$ of \algacronym/-L.
  }
\label{fig:convergence_disagreement_rate_L}
\end{figure}

\begin{figure}[!htbp]
  \centering
	\includegraphics[scale=1]{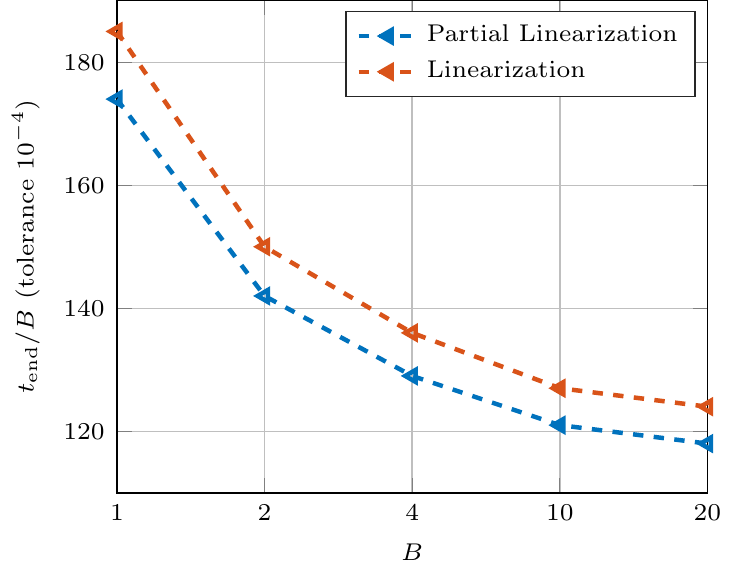}
\caption{
  Normalized completion time  to obtain $J^t <10^{-4}$ versus the number of blocks $B$,
  for \algacronym/-PL and \algacronym/-L.
  }
\label{fig:blk_tradeoff}
\end{figure}


\section{Conclusions}
\label{sec:conclusions}
In this paper we  proposed a novel block-iterative distributed scheme for
nonconvex, big-data optimization problems over (directed) networks. The key  novel feature of the scheme is a block-wise minimization from the agents of a convex approximation of the sum-utility, coupled with  a   block-wise consensus/tracking mechanism  aiming to average both the local copies of the agents' decision variables and the local estimates of the cost-function gradient.  Asymptotic convergence to a stationary solution of the problem  as well as consensus of the agents' local variables  was proved. 

\bibliographystyle{IEEEtran}
\bibliography{distributed_nonconvex_blocks}

\end{document}